\newcommand{\keywords}[1]{\textbf{Keywords.} #1}
\theoremstyle{definition}
\newtheorem{definition}{Definition}
\newtheorem{remark}{Remark}
\theoremstyle{theorem}
\newtheorem{proposition}{Proposition}
\newcommand{\algorithmConfig}{
  \DontPrintSemicolon
  \SetKw{Or}{or}
  \SetKwInOut{Input}{Input}
  \SetKwInOut{Output}{Output}
}
\newcommand{\removelatexerror}{\let\@latex@error\@gobble}
\newcommand{\finja}{\textsf{finja}\xspace}
\newcommand{\Z}{\mathbb{Z}}
\newcommand{\ie}{\textit{i.e.}}
\newcommand{\eg}{\textit{e.g.}}
\newcommand{\etal}{\textit{et al.}\xspace}
\newcommand{\etc}{\textit{etc.}\xspace}
\title{A Formal Proof of Countermeasures Against Fault Injection Attacks on CRT-RSA}
\author{Pablo Rauzy --- Sylvain Guilley\\
Institut Mines-Télécom ; Télécom ParisTech ; CNRS LTCI\\
\{{\it firstname}.{\it lastname}\}@telecom-paristech.fr}
\begin{document}

\maketitle

\begin{abstract}
In this article, we describe a methodology that aims at either breaking or proving the security of CRT-RSA implementations against fault injection attacks.
In the specific case-study of the BellCoRe attack,
our work bridges a gap between formal proofs and implementation-level attacks.
We apply our results to three implementations of CRT-RSA,
namely the unprotected one, that of Shamir, and that of Aumüller \etal
Our findings are that many attacks are possible on both the unprotected and the Shamir implementations,
while the implementation of Aumüller \etal is resistant to all single-fault attacks.
It is also resistant to double-fault attacks if we consider the less powerful threat-model of its authors.

\keywords{RSA (\textit{Rivest, Shamir, Adleman}~\cite{DBLP:journals/cacm/RivestSA78}) \and
CRT (\textit{Chinese Remainder Theorem}) \and
fault injection \and
BellCoRe (\textit{Bell Communications Research}) attack \and
formal proof \and
OCaml}
\end{abstract}

\section{Introduction}

It is known since 1997 that injecting faults during the computation of CRT-RSA could yield to malformed signatures that expose the prime factors ($p$ and $q$) of the public modulus ($N=p \cdot q$).
Notwithstanding, computing without the fourfold acceleration conveyed by the CRT is definitely not an option in practical applications.
Therefore, many countermeasures have appeared that consist in step-wise internal checks during the CRT computation.
To our best knowledge, none of these countermeasures have been proven formally.
Thus without surprise, some of them have been broken, and then patched.
The current state-of-the-art in computing CRT-RSA without exposing $p$ and $q$ relies thus on algorithms that have been carefully scrutinized by cryptographers.
Nonetheless, neither the hypotheses of the fault attack nor the security itself have been unambiguously modeled.

This is the purpose of this paper.
The difficulties are \emph{a priori} multiple:
in fault injection attacks, the attacker has an extremely high power because he can fault any variable.
Traditional approaches thus seem to fall short in handling this problem.
Indeed, there are two canonical methodologies:
\emph{formal} and \emph{computational} proofs.
Formal proofs (\eg, in the so-called Dolev-Yao model) do not capture the requirement for faults to preserve some information about one of the two moduli; indeed, it considers the RSA as a black-box with a key pair.
Computational proofs are way too complicated (in terms of computational complexity) since the handled numbers are typically $2,048$~bit long.

The state-of-the-art contains one reference related to the formal proof of a CRT-RSA implementation: it is the work of Christofi, Chetali, Goubin and Vigilant~\cite{JCEN-Christofi13}.
For tractability purposes, the proof is conducted on reduced versions of the algorithms parameters.
One fault model is chosen authoritatively (the zeroization of a complete intermediate data), which is a strong assumption.
In addition, the verification is conducted on a pseudo-code, hence concerns about its portability after its compilation into machine-level code.
Another reference related to formal proofs and fault injection attacks is the work of Guo, Mukhopadhyay, and Karri.
In~\cite{cryptoeprint:2012:552}, they explicit an AES implementation that is provably protected against differential fault analyses~\cite{biham97differential}.
The approach is purely combinational, because the faults propagation in AES concerns $32$-bit words called columns;
consequently, all fatal faults (and thus all innocuous faults) can be enumerated.

\paragraph{Contributions.}
Our contribution is to reach a full fault coverage of the CRT-RSA algorithm, thereby keeping the proof valid even if the code is transformed (\eg, compiled or partitioned in software/hardware).
To this end we developed a tool called \finja\footnote{\url{http://pablo.rauzy.name/sensi/finja.html}} based on symbolic computation in the framework of modular arithmetic,
which enables formal an\-al\-ysis of CRT-RSA and its countermeasures against fault injection attacks.
We apply our methods on three implementations: the unprotected one, the one protected by Shamir's countermeasure, and the one protected by Aumüller \etal's countermeasure.
We find many possible fault injections that enable a BellCoRe attack on the unprotected implementation of the CRT-RSA computation, as well as on the one protected by Shamir's countermeasure.
We formally prove the security of the Aumüller \etal's countermeasure against the BellCoRe attack,
under a fault model that considers \emph{permanent faults} (in memory) and \emph{transient faults} (one-time faults, even on copies of the secret key parts),
with or without forcing at zero, and with possibly faults at various locations.

\paragraph{Organization of the paper.}
We recall the CRT-RSA cryptosystem and the BellCoRe attack in Sec.~\ref{sec-bellcore};
still from an historical perspective, we explain how the CRT-RSA implementation has been amended to withstand more or less efficiently the BellCoRe attack.
Then, in Sec.~\ref{sec-approach}, we define our approach.
Sec.~\ref{sec-unprotected}, Sec.~\ref{sec-shamir}, and Sec.~\ref{sec-aumuller} are case studies using the methods developed in Sec.~\ref{sec-approach} of respectively an unprotected version of the CRT-RSA computation, a version protected by Shamir's countermeasure, and a version protected by Aumüller \etal's countermeasure.
Conclusions and perspectives are in Sec.~\ref{sec-perspectives}.

\section{CRT-RSA and the BellCoRe Attack}
\label{sec-bellcore}

This section recaps known results about fault injection attacks on CRT-RSA (see also \cite{koc_rsa} and \cite[Chap. 3]{Intro_HOST}).
Its purpose is to settle the notions and the associated notations that will be used in the later sections (that contain novel contributions).

\subsection{CRT-RSA}
\label{sub-CRTRSA}

RSA is both an \emph{encryption} and a \emph{signature} scheme.
It relies on the identity that for all message $0\leq m<N$,
$(m^d)^e \equiv m \mod N$, where $d \equiv e^{-1} \mod \varphi(N)$, by Euler's theorem.
In this equation, $\varphi$ is Euler's totient function, equal to $\varphi(N)=(p-1) \cdot (q-1)$ when $N=p \cdot q$ is a composite number, product of two primes $p$ and $q$.
For example, if Alice generates the signature $S=m^d \mod N$,
then Bob can verify it by computing $S^e \mod N$, which must be equal to $m$ unless Alice is pretending to know $d$ although she does not.
Therefore the pair $(N,d)$ is called the private key, while the pair $(N,e)$ is called the public key.
In this paper, we are not concerned about the key generation step of RSA,
and simply assume that $d$ is an unknown number in $\llbracket 1, \varphi(N)=(p-1) \cdot (q-1)\llbracket$.
Actually, $d$ can also be chosen equal to the smallest value $e^{-1} \mod \lambda(n)$,
where $\lambda(n) = \frac{(p-1) \cdot (q-1)}{\gcd(p-1, q-1)}$ is the Carmichael function.
The computation of $m^d \mod N$ can be speeded-up by a factor four by using the Chinese Remainder Theorem (CRT).
Indeed, figures modulo $p$ and $q$ are twice as short as those modulo $N$.
For example, for $2,048$~bit RSA, $p$ and $q$ are $1,024$~bit long.
The CRT-RSA consists in computing $S_p = m^d \mod p$ and $S_q = m^d \mod q$,
which can be recombined into $S$ with a limited overhead.
Due to the little Fermat theorem (special case of the Euler theorem when the modulus is a prime),
$S_p = (m \mod p)^{d \mod (p-1)} \mod p$.
This means that in the computation of $S_p$, the processed data have $1,024$~bit,
and the exponent itself has $1,024$~bits (instead of $2,048$~bits).
Thus the multiplication is four times faster and the exponentiation eight times faster.
However, as there are two such exponentiations (modulo $p$ and $q$), the overall CRT-RSA is roughly speaking four times faster than RSA computed modulo $N$.

This acceleration justifies that CRT-RSA is always used if the factorization of $N$ as $p \cdot q$ is known.
In CRT-RSA, the private key is a more rich structure than simply $(N,d)$:
it is actually comprised of the $5$-tuple $(p, q, d_p, d_q, i_q)$, where:
\begin{compactitem}
\item $d_p \doteq d \mod (p-1)$,
\item $d_q \doteq d \mod (q-1)$,
\item $i_q \doteq q^{-1} \mod p$.
\end{compactitem}
The unprotected CRT-RSA algorithm is presented in Alg.~\ref{alg-crt-rsa-unprotected}.
It takes advantage of the CRT recombination proposed by Garner in~\cite{DBLP:journals/ac/Garner65}. %~\cite{Garner:1959:RNS}.
It is straightforward to check that the signature computed at line~\ref{alg-crt-rsa-unprotected-S}
belongs to $\llbracket 0, p \cdot q -1 \rrbracket$.
Consequently, no reduction modulo $N$ is necessary before returning $S$.

\begin{algorithm}
\algorithmConfig
\Input{Message $m$, key $(p, q, d_p, d_q, i_q)$}
\Output{Signature $m^d \mod N$}
\BlankLine
$S_p = m^{d_p} \mod p$ \tcc*[r]{Signature modulo $p$}
$S_q = m^{d_q} \mod q$ \tcc*[r]{Signature modulo $q$}
$S = S_q + q \cdot (i_q \cdot (S_p-S_q) \mod p)$ \tcc*[r]{Recombination} \label{alg-crt-rsa-unprotected-S}
\Return $S$ \;

\caption{Unprotected CRT-RSA}
\label{alg-crt-rsa-unprotected}
\end{algorithm}

\subsection{BellCoRe Attack on CRT-RSA}

In 1997, an dreadful remark has been made by Boneh, DeMillo and Lipton~\cite{boneh-fault}, three staff members of BellCoRe:
Alg.~\ref{alg-crt-rsa-unprotected} could reveal the secret primes $p$ and $q$ if the computation is faulted, even in a very random way.
The attack can be expressed as the following proposition.
\begin{proposition}[Orignal BellCoRe attack]
If the intermediate variable $S_p$ (resp. $S_q$) is returned faulted as $\widehat{S_p}$ (resp. $\widehat{S_q}$)\footnote{In other papers related to faults, the faulted variables (such as $X$) are noted either with a star ($X^*$) or a tilde ($\tilde{X}$); in this paper, we use a hat, as it can stretch, hence cover the adequate portion of the variable. For instance, it allows to make an unambiguous difference between a faulted data raised at some power and a fault on a data raised at a given power (contrast $\widehat{X}^e$ with $\widehat{X^e}$).},
then the attacker gets an erroneous signature $\widehat{S}$,
and is able to recover $p$ (resp. $q$) as $\gcd(N, S-\widehat{S})$
(\emph{with an overwhelming probability}).
\label{pro-bellcore2faults}
\end{proposition}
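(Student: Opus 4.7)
The plan is to trace the faulted value through the CRT recombination of line~\ref{alg-crt-rsa-unprotected-S} and show that $\widehat S$ agrees with the honest $S$ modulo exactly one of the two primes, namely the one whose sub-signature was \emph{not} faulted. The gcd of $S-\widehat S$ with $N=p\cdot q$ then isolates that prime.

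Concretely, I would treat the case where $S_p$ is corrupted into $\widehat{S_p}$, the symmetric case of a fault on $S_q$ being identical after the exchange $p\leftrightarrow q$. Substituting $\widehat{S_p}$ into the recombination formula and reducing modulo each prime, one observes two things. Modulo $q$, the entire summand $q\cdot(\ldots)$ vanishes, so $\widehat S\equiv S_q\pmod q$. Modulo $p$, the defining identity $q\cdot i_q\equiv 1\pmod p$ makes the combination collapse to $\widehat S\equiv \widehat{S_p}\pmod p$. The same two reductions with $S_p$ in place of $\widehat{S_p}$ describe the honest $S$. Subtracting yields
\[
S-\widehat S \equiv 0 \pmod q \qquad\text{and}\qquad S-\widehat S \equiv S_p-\widehat{S_p} \pmod p,
\]
so $q$ always divides $S-\widehat S$, whereas $p$ does so only if the fault accidentally preserves $S_p\bmod p$.

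The only genuinely delicate point — and the expected main obstacle — is the \emph{overwhelming probability} qualifier. One must argue that $\widehat{S_p}\equiv S_p\pmod p$ is a negligible event. For any fault distribution independent of $p$ this is immediate, with probability at most $1/p$ (vanishing for $p$ of cryptographic size, typically $1024$ bits); in an adversarial model it follows because the attacker does not know $p$ and thus cannot engineer a cancellation modulo $p$ better than by chance. Under this assumption, $\gcd(N,S-\widehat S)$ is neither $1$ nor $N$, and since $N$ has only the two nontrivial factors $p$ and $q$ it must equal the unfaulted prime ($q$ here, $p$ in the symmetric case), which is what the proposition asserts.
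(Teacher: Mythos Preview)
Your argument is correct and essentially matches the paper's: both trace the fault through Garner's recombination and show that $S-\widehat S$ is divisible by exactly one of the two primes, with the probabilistic caveat handled the same way. The paper differs only cosmetically in the $S_p$ case, where it factors $q$ out of $S-\widehat S$ explicitly rather than reducing modulo~$q$; for the $S_q$ case it does exactly your modular reduction via $q\cdot i_q\equiv 1\pmod p$.

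One small point worth tightening: Garner's formula $S=S_q+q\cdot(i_q\cdot(S_p-S_q)\bmod p)$ is \emph{not} symmetric in $p$ and $q$, so the $S_q$ case is not literally ``identical after the exchange $p\leftrightarrow q$''. Your reduction-modulo-each-prime method still works there (one gets $\widehat S\equiv S_p\pmod p$ because the $\widehat{S_q}$ contributions cancel, and $\widehat S\equiv\widehat{S_q}\pmod q$), but the paper treats the two cases separately for this reason, and you should do the short computation rather than appeal to symmetry. Note also that, as in the paper's own proof, a fault on $S_p$ yields $\gcd(N,S-\widehat S)=q$ (not $p$ as the proposition's wording suggests); of course either factor determines the other.
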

\begin{proof}
For all integer $x$, $\gcd(N, x)$ can only take $4$ values:
\begin{compactitem}
\item $1$, if $N$ and $x$ are coprime,
\item $p$, if $x$ is a multiple of $p$ but not of $q$,
\item $q$, if $x$ is a multiple of $q$ but not of $p$,
\item $N$, if $x$ is a multiple of both $p$ and $q$, \ie, of $N$.
\end{compactitem}
\smallskip
In Alg.~\ref{alg-crt-rsa-unprotected}, if $S_p$ is faulted (\ie, replaced by $\widehat{S_p} \neq S_p$),
then \\
$S-\widehat{S} = q \cdot (( i_q \cdot (S_p-S_q) \mod p) - ( i_q \cdot (\widehat{S_p}-S_q) \mod p))$,\\
and thus $\gcd(N, S-\widehat{S}) = q$.\\
If $S_q$ is faulted (\ie, replaced by $\widehat{S_q} \neq S_q$),
then \\
$S-\widehat{S} \equiv (S_q - \widehat{S_q}) - ( q \mod p) \cdot i_q \cdot (S_q - \widehat{S_q}) \equiv 0 \mod p$ because $(q \mod p) \cdot i_q \equiv 1 \mod p$.
Thus $S-\widehat{S}$ is a multiple of $p$.
Additionally, $S-\widehat{S}$ is not a multiple of $q$.\\
So $\gcd(N, S-\widehat{S})=p$.

In both cases, the greatest common divisor could yield $N$.
However, $(S-\widehat{S})/q$ in the first case (resp. $(S-\widehat{S})/p$ in the second case)
is very unlikely to be a multiple of $p$ (resp. $q$).
Indeed, if the random fault is uniformly distributed,
the probability that $\gcd(N, S-\widehat{S})$ is equal to $p$ (resp. $q$) is negligible%
\footnote{If it nonetheless happens that $\gcd(N, S-\widehat{S})=N$,
then the attacker can simply retry another fault injection, for which the probability that $\gcd(N, S-\widehat{S}) \in\{p,q\}$ increases.}.
\qed\end{proof}

This version of the BellCoRe attack requires that two identical messages with the same key can be signed;
indeed, one signature that yields the genuine $S$ and an other that is perturbed and thus returns $\widehat{S}$ are needed.
Little later, the BellCoRe attack has been improved by Joye, Lenstra and Quisquater~\cite{DBLP:journals/joc/JoyeLQ99}.
This time, the attacker can recover $p$ or $q$ with one only faulty signature,
provided the input $m$ of RSA is known.

\begin{proposition}[One faulty signature BellCoRe attack]
If the intermediate variable $S_p$ (resp. $S_q$) is returned faulted as $\widehat{S_p}$ (resp. $\widehat{S_q}$),
then the attacker gets an erroneous signature $\widehat{S}$,
and is able to recover $p$ (resp. $q$) as $\gcd(N, m-\widehat{S}^e)$
(\emph{with an overwhelming probability}).
\label{pro-bellcore1fault}
\end{proposition}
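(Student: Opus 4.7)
The plan is to mirror the structure of the proof of Proposition~\ref{pro-bellcore2faults}, but to replace the quantity $S$ --- which would require a second, unfaulted execution to obtain --- by a reference value that the attacker can compute directly from the public exponent $e$ and the public modulus $N$. The natural candidate is $m$ itself, viewed as $m \equiv S^e \bmod N$ by RSA correctness: the attacker evaluates $\widehat{S}^e \bmod N$ and compares with $m$, then takes the gcd of $m - \widehat{S}^e$ with $N$.

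I would begin by splitting the analysis modulo $p$ and modulo $q$ via the Chinese Remainder Theorem. Focus, without loss of generality, on the case where $S_p$ is faulted and $S_q$ is intact. By direct inspection of the recombination on line~\ref{alg-crt-rsa-unprotected-S}, the expression $S_q + q \cdot (\cdots)$ implies $\widehat{S} \equiv S_q \bmod q$, irrespective of the value of $\widehat{S_p}$. Raising to the $e$-th power and applying the little Fermat theorem together with the defining identity $e \cdot d_q \equiv 1 \bmod (q-1)$, one obtains $\widehat{S}^e \equiv m \bmod q$, so that $q$ divides $m - \widehat{S}^e$.

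The complementary step is to show that $p$ does not divide $m - \widehat{S}^e$ with overwhelming probability. Reducing the recombination modulo $p$ gives $\widehat{S} \equiv \widehat{S_p} \bmod p$ (and, symmetrically, $S \equiv S_p \bmod p$), so the congruence $\widehat{S}^e \equiv m \bmod p$ would be equivalent to $\widehat{S_p}^e \equiv S_p^e \bmod p$. Since $\gcd(e, p-1) = 1$, the map $x \mapsto x^e$ is a bijection of $\mathbb{Z}/p\mathbb{Z}$, so this would force $\widehat{S_p} \equiv S_p \bmod p$; for a uniformly distributed random fault this occurs only with probability of order $1/p$, which is negligible for cryptographic-size primes. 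The same enumeration of the possible values of $\gcd(N, \cdot)$ used in the proof of Proposition~\ref{pro-bellcore2faults} then yields the announced gcd; the case where $S_q$ is the faulted share is entirely symmetric.

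The main obstacle I anticipate is purely expository: keeping the moduli $p$, $q$, $p-1$ and $q-1$ carefully separated when invoking Fermat's little theorem, and being explicit that the ``overwhelming probability'' clause rests on the same random-fault hypothesis as in the two-signature attack. No new algebraic ingredient beyond what already appears in the proof of Proposition~\ref{pro-bellcore2faults} is required.
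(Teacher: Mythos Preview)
Your argument is correct and follows essentially the same route as the paper: reduce modulo $q$ to get $\widehat{S}^e\equiv m\bmod q$, reduce modulo $p$ to get (with overwhelming probability) $\widehat{S}^e\not\equiv m\bmod p$, and conclude via the four-value enumeration for $\gcd(N,\cdot)$. The only difference is organizational: the paper invokes Proposition~\ref{pro-bellcore2faults} directly to obtain $S\equiv\widehat{S}\bmod q$ and $S\not\equiv\widehat{S}\bmod p$, then raises both sides to the $e$-th power and substitutes $S^e\equiv m\bmod N$, whereas you re-derive these congruences from the recombination formula. Your justification that $x\mapsto x^e$ is a bijection on $\mathbb{Z}/p\mathbb{Z}$ because $\gcd(e,p-1)=1$ is in fact cleaner than the paper's parenthetical remark (``we would have $e\mid p-1$, which is very unlikely''), since in RSA this coprimality is a structural requirement rather than a probabilistic one.
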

\begin{proof}
By proposition~\ref{pro-bellcore2faults}, if a fault occurs during the computation of $S_p$,
then $\gcd(N, S-\widehat{S})=q$ (\emph{most likely}).
This means that:
\begin{compactitem}
\item $S \not\equiv \widehat{S} \mod p$, and thus $S^e \not\equiv \widehat{S}^e \mod p$
(\emph{indeed, if the congruence was true, we would have $e|p-1$, which is very unlikely});
\item $S \equiv \widehat{S} \mod q$, and thus $S^e \equiv \widehat{S}^e \mod q$;
\end{compactitem}
As $S^e \equiv m \mod N$, this proves the result.
A symmetrical reasoning can be done if the fault occurs during the computation of $S_q$.
\qed\end{proof}

\subsection{Protection of CRT-RSA Against the BellCoRe Attack}

Several protections against the BellCoRe attack have been proposed.
A non-exhaustive list is given below, and then, the most salient features of these countermeasures are described:

\begin{itemize}
\item Obvious countermeasures: no CRT optimization, or with signature verification;
\item Shamir~\cite{shamir-patent-rsa-crt};
\item Aumüller \etal~\cite{DBLP:conf/ches/AumullerBFHS02};
\item Vigilant, original~\cite{DBLP:conf/ches/Vigilant08} and with some corrections by Coron \etal~\cite{DBLP:conf/fdtc/CoronGMPV10};
\item Kim \etal~\cite{Kim:2011:ECA:2010601.2010865}.
\end{itemize}

\subsubsection{Obvious Countermeasures}

% Trivial solution #1
Fault attacks on RSA can be thwarted simply by refraining from implementing the CRT.

% Trivial solution #2
If this is not affordable, then the signature can be verified before being outputted.
If $S=m^d \mod N$ is the signature, this straightforward countermeasure consists in testing $S^e \stackrel{?}{\equiv} m \mod N$.
Such protection is efficient in practice, but is criticized for three reasons.
First of all, it requires an access to $e$, which is not always present in the secret key structure,
as in the $5$-tuple example given in Sec.~\ref{sub-CRTRSA}.
Nonetheless, we attract the author's attention on paper~\cite{DBLP:conf/fdtc/Joye09} for a clever embedding of $e$ into [the representation of] $d$. % SG: I quoted the author
Second, the performances are incurred by the extra exponentiation needed for the verification.
In some applications, the public exponent can be chosen small (for instance $e$ can be equal to a number such as $3$, $17$ or $65537$),
and then $d$ is computed as $e^{-1} \mod \lambda(N)$ using the extended Euclidean algorithm or better alternatives~\cite{DBLP:conf/ches/JoyeP03}.
But in general, $e$ is a number possibly as large as $d$ (both are as large as $N$),
thus the obvious countermeasure doubles the computation time (which is really non-negligible, despite the CRT fourfold acceleration).
Third, this protection is not immune to a fault injection that would target the comparison.
Overall, this explains why other countermeasures have been devised.

\subsubsection{Shamir}

The CRT-RSA algorithm of Shamir builds on top of the CRT and introduces, in addition to the two primes $p$ and $q$, a third factor $r$.
This factor $r$ is random%
\footnote{The authors notice that in Shamir's countermeasure, $r$ is \emph{a priori} not a secret, hence can be static and safely divulged.}
and small (less than $64$~bit long),
and thus co-prime with $p$ and $q$.
The computations are carried out modulo $p' = p \cdot r$ (resp. modulo $q' = q \cdot r$),
which allows for a retrieval of the intended results by reducing them modulo $p$ (resp. modulo $q$),
and for a verification by a reduction modulo $r$.
Alg.~\ref{alg-crt-rsa-shamir} describes one version of Shamir's countermeasure.
This algorithm is aware of possible fault injections, and thus can raise an \emph{exception} if an incoherence is detected.
In this case, the output is not the (purported faulted) signature, but a specific message ``$\mathsf{error}$''.

\begin{algorithm}
\algorithmConfig
\Input{Message $m$, key $(p, q, d, i_q)$,\\
$32$-bit random prime $r$}
\Output{Signature $m^d \mod N$,\\
or $\mathsf{error}$ if some fault injection has been detected.}
\BlankLine
$p' = p \cdot r$ \;
$d_p=d \mod (p-1) \cdot (r-1)$ \;
$S'_p = m^{d_p} \mod p'$ \tcp*[r]{Signature modulo $p'$}
\BlankLine
$q' = q \cdot r$ \;
$d_q=d \mod (q-1) \cdot (r-1)$ \;
$S'_q = m^{d_q} \mod q'$ \tcp*[r]{Signature modulo $q'$}
\BlankLine
$S_p = S'_p \mod p$ \; \label{alg-crt-rsa-shamir-sp}
$S_q = S'_q \mod q$ \;
\BlankLine
\tcp*[l]{\label{alg-crt-rsa-shamir-recombination}Same as in line~\ref{alg-crt-rsa-unprotected-S} of Alg.~\ref{alg-crt-rsa-unprotected}}
$S = S_q + q \cdot (i_q \cdot (S_p-S_q) \mod p)$ \;
\BlankLine
\eIf{$S'_p \not\equiv S'_q \mod r$}{ \label{alg-crt-rsa-shamir-test}
	\Return $\mathsf{error}$ \;
}{
	\Return $S$ \;
}

\caption{Shamir CRT-RSA}
\label{alg-crt-rsa-shamir}
\end{algorithm}

\subsubsection{Aumüller}

The CRT-RSA algorithm of Aumüller \etal is a variation of that of Shamir,
that is primarily intended to fix two shortcomings.
First it removes the need for $d$ in the signature process,
and second, it also checks the recombination step.
The countermeasure, given in Alg.~\ref{alg-crt-rsa-aumuller},
introduces, in addition to $p$ and $q$, a third prime $t$.
The computations are done modulo $p' = p \cdot t$ (resp. modulo $q' = q \cdot t$),
which allows for a retrieval of the intended results by reducing them modulo $p$ (resp. modulo $q$),
and for a verification by a reduction modulo $t$.
However, the verification is more subtle than for the case of Shamir.
In Shamir's CRT-RSA (Alg.~\ref{alg-crt-rsa-shamir}),
the verification is \emph{symmetrical}, in that the computations modulo $p \cdot r$ and $q \cdot r$ operate on the same object, namely $m^d$.
In Aumüller \etal's CRT-RSA (Alg.~\ref{alg-crt-rsa-aumuller}),
the verification is \emph{asymmetrical}, since the computations modulo $p \cdot t$ and $q \cdot t$ operate on two different objects, namely $m^{d_p \mod (t-1)}$ and $m^{d_q \mod (t-1)}$.
The verification consists in an identity that resembles that of ElGamal for instance: is $(m^{d_p \mod (t-1)})^{d_q \mod (t-1)}$ equivalent to $(m^{d_q \mod (t-1)})^{d_p \mod (t-1)}$ modulo $t$?
Specifically, if we note $S'_p$ the signature modulo $p'$,
then $S_p = S \mod p$ is equal to $S'_p \mod p$.
Furthermore, let us denote
\begin{compactitem}
\item $S_{pt} = S'_p \mod t$,
\item $S_{qt} = S'_q \mod t$,
\item $d_{pt} = d_p \mod (t-1)$, and
\item $d_{qt} = d_q \mod (t-1)$.
\end{compactitem}
It can be verified that those figures satisfy the identity:
$S_{pt}^{d_{qt}} \equiv S_{qt}^{d_{pt}} \mod t$,
because both terms are equal to $m^{d_{pt} \cdot d_{qt}} \mod t$.
The prime $t$ is referred to as a security parameter, as the probability to pass the test (at line~\ref{alg-crt-rsa-aumuller-ElGamal} of Alg.~\ref{alg-crt-rsa-aumuller}) is equal to $1/t$ (\ie, about $2^{-32}$),
assuming a uniform distribution of the faults.
Indeed, this is the probability to find a large number that, once reduced modulo $t$, matches a predefined value.

\begin{algorithm}
\algorithmConfig
\Input{Message $m$, key $(p, q, d_p, d_q, i_q)$,\\
$32$-bit random prime $t$}
\Output{Signature $m^d \mod N$,\\
or $\mathsf{error}$ if some fault injection has been detected.}
\BlankLine
$p'=p \cdot t$ \;
$d'_p=d_p + \mathsf{random}_1 \cdot (p-1)$ \tcp*[r]{Against SPA, not fault attacks} \label{alg-crt-rsa-aumuller-line_dpa_p}
$S'_p = m^{d'_p} \mod p'$ \tcp*[r]{Signature modulo $p'$}
\If{$(p' \mod p \neq 0)$ \Or $(d'_p \not\equiv d_p \mod (p-1))$}{ \label{alg-crt-rsa-aumuller-test1}
	\Return $\mathsf{error}$
}
\BlankLine
$q'=q \cdot t$ \;
$d'_q=d_q + \mathsf{random}_2 \cdot (q-1)$ \tcp*[r]{Against SPA, not fault attacks} \label{alg-crt-rsa-aumuller-line_dpa_q}
$S'_q = m^{d'_q} \mod q'$ \tcp*[r]{Signature modulo $q'$}
\If{$(q' \mod q \neq 0)$ \Or $(d'_q \not\equiv d_q \mod (q-1))$}{ \label{alg-crt-rsa-aumuller-test2}
	\Return $\mathsf{error}$ \;
}
\BlankLine
$S_p = S'_p \mod p$ \;
$S_q = S'_q \mod q$ \;
$S = S_q + q \cdot (i_q \cdot (S_p-S_q) \mod p)$ \tcp*[r]{Same as in line~\ref{alg-crt-rsa-unprotected-S} of Alg.~\ref{alg-crt-rsa-unprotected}}
\If{$(S-S'_p \not\equiv 0 \mod p)$ \Or $(S-S'_q \not\equiv 0 \mod q)$}{ \label{alg-crt-rsa-aumuller-test3}
	\Return $\mathsf{error}$ \;
}
\BlankLine
$S_{pt} = S'_p \mod t$ \;
$S_{qt} = S'_q \mod t$ \;
$d_{pt} = d'_p \mod (t-1)$ \;
$d_{qt} = d'_q \mod (t-1)$ \;
\eIf{$S_{pt}^{d_{qt}} \not\equiv S_{qt}^{d_{pt}} \mod t$}{ \label{alg-crt-rsa-aumuller-ElGamal}
	\Return $\mathsf{error}$ \;
}{
	\Return $S$ \; \label{alg-crt-rsa-aumuller-return_S}
}

\caption{Aumüller CRT-RSA}
\label{alg-crt-rsa-aumuller}
\end{algorithm}

Alg.~\ref{alg-crt-rsa-aumuller} does some verifications during the computations, and reports an error in case a fault injection can cause a malformed signature susceptible of unveiling $p$ and $q$.
More precisely, an error is returned in either of these seven cases:
\begin{enumerate}
\item $p'$ is not a multiple of $p$
(\emph{because this would amount to faulting $p$ in the unprotected algorithm})
\item $d'_p = d_p + \mathsf{random}_1 \cdot (p-1)$ is not equal to $d_p \mod (p-1)$
(\emph{because this would amount to faulting $d_p$ in the unprotected algorithm})
\item $q'$ is not a multiple of $q$
(\emph{because this would amount to faulting $q$ in the unprotected algorithm})
\item $d'_q = d_q + \mathsf{random}_2 \cdot (q-1)$ is not equal to $d_q \mod (q-1)$
(\emph{because this would amount to faulting $d_q$ in the unprotected algorithm})
\item $S-S'_p \mod p$ is nonzero
(\emph{because this would amount to faulting the recombination modulo $p$ in the unprotected algorithm})
\item $S-S'_q \mod q$ is nonzero
(\emph{because this would amount to faulting the recombination modulo $q$ in the unprotected algorithm})
\item $S_{pt}^{d_q} \mod t$ is not equal to $S_{qt}^{d_p} \mod t$
(\emph{this checks simultaneously for the integrity of $S'_p$ and $S'_q$})
\end{enumerate}

Notice that the last verification could not have been done on the unprotected algorithm,
it constitutes the added value of  Aumüller \etal's algorithm.
These seven cases are \emph{informally} assumed to protect the algorithm against the BellCoRe attack.
The criteria for fault detection is not to detect all faults;
for instance, a fault on the final return of $S$ (line~\ref{alg-crt-rsa-aumuller-return_S}) is not detected.
However, of course, such a fault is not exploitable by a BellCoRe attack.

\begin{remark}
Some parts of the Aumüller algorithm are actually not intended to protect against fault injection attacks,
but against side-channel analysis, such as the simple power analysis (SPA).
This is the case of lines~\ref{alg-crt-rsa-aumuller-line_dpa_p} and~\ref{alg-crt-rsa-aumuller-line_dpa_q} in Alg.~\ref{alg-crt-rsa-aumuller}.
These SPA attacks consist in monitoring via a side-channel the activity of the chip,
in a view to extract the secret exponent, using \emph{generic} methods described in~\cite{kocher-dpa_and_related_attacks}
or more \emph{accurate} techniques such as wavelet transforms~\cite{NIAT11_wavelet,YS:HASP12}.
They can be removed if a minimalist protection against only fault injection attacks is looked for;
but as they do not introduce weaknesses (in this very specific case), they are simply kept as such.
\end{remark}

\subsubsection{Vigilant}

The CRT-RSA algorithm of Vigilant~\cite{DBLP:conf/ches/Vigilant08} also considers computations in a larger ring than $\Z_p$ (abbreviation for $\Z/p\Z$) and $\Z_q$, to enable verifications.
In this case, a small random number $r$ is cast, and computations are carried out in $\Z_{p \cdot r^2}$ and $\Z_{q \cdot r^2}$.
In addition, the computations are now conducted not on the plain message $m$, but on an encoded message $m'$,
built using the CRT as the solution of those two requirements:
\begin{compactenum}
\item[\emph{i}: ]  $m' \equiv m \mod N$, and
\item[\emph{ii}: ] $m' \equiv 1+r \mod r^2$.
\end{compactenum}
This system of equations has a single solution modulo $N \times r^2$, because $N$ and $r^2$ are coprime.
Such a representation allows to conduct in parallel
the functional CRT-RSA (line \emph{i}) and
a verification (line \emph{ii}).
The verification is elegant, as it leverages this remarkable equality:
$(1+r)^{d_p} = \sum_{i=0}^{d_p} {d_p \choose i} \cdot r^i \equiv 1 + d_p \cdot r \mod r^2$.
Thus, as opposed to Aumüller \etal's CRT-RSA, which requires one exponentiation (line~\ref{alg-crt-rsa-aumuller-ElGamal} of Alg.~\ref{alg-crt-rsa-aumuller}), the verification of Vigilant's algorithm adds only one affine computation (namely $1 + d_p \mod r^2$).

The original description of Vigilant's algorithm involves some trivial computations on $p$ and $q$, such as $p-1$, $q-1$ and $p \times q$.
Those can be faulted, in such a way the BellCoRe attack becomes possible despite all the tests.
Thus, a patch by Coron \etal has been released in~\cite{DBLP:conf/fdtc/CoronGMPV10} to avoid the reuse of $\widehat{p-1}$, $\widehat{q-1}$ and $\widehat{p \times q}$ in the algorithm.

\subsubsection{Kim}

Kim, Kim, Han and Hong propose in~\cite{Kim:2011:ECA:2010601.2010865} a CRT-RSA algorithm that is based on a collaboration between a customized modular exponentiation and verifications at the recombination level based on Boolean operations.
The underlying protection concepts being radically different from the algorithms of Shamir, Aumüller and Vigilant, we choose not to detail this interesting countermeasure.

\subsubsection{Other Miscellaneous Fault Injections Attacks}

When the attacker has the power to focus its fault injections on \emph{specific bits} of \emph{sensitive resources},
then more challenging security issues arise~\cite{DBLP:series/isc/BerzatiCG12}.
These threats require a highly qualified expertise level, and are thus considered out of the scope of this paper.

Besides, for completeness, we mention that other fault injections mitigating techniques have been promoted,
such as the \emph{infective computation scheme} (refer to the seminal paper~\cite{DBLP:conf/ccs/BlomerOS03}).
This family of protections, although interesting, is neither covered by this article.

\bigskip

In this paper, we will focus on three implementations, namely
the unprotected one (Sec.~\ref{sec-unprotected}),
the one protected by Shamir's countermeasure (Sec.~\ref{sec-shamir}),
and the one protected by Aumüller \etal's countermeasure (Sec.~\ref{sec-aumuller}).

\section{Formal Methods}
\label{sec-approach}

For all the countermeasures presented in the previous section (Sec.~\ref{sec-bellcore}),
we can see that no formal proof of resistance against attacks is claimed.
Informal arguments are given, that convince that for some attack scenarii, the attack attempts are detected hence harmless.
Also, an analysis of the probability that an attack succeeds by chance (with a low probability of $1/t$) is carried out,
however, this analysis strong\-ly relies on assumptions on the faults distribution.
Last but not least, the algorithms include protections against both passive side-channel attacks (typically SPA) and active side-channel attacks, which makes it difficult to analyze for instance the minimal code to be added for the countermeasure to be correct.

\subsection{CRT-RSA and Fault Injections}

Our goal is to prove that a given countermeasure works,
\ie, that it delivers a result which does leak information about neither $p$ nor $q$ even when the implementation is subject to fault injections and to a BellCoRe attack.
In addition, we wish to reach this goal with the two following assumptions:

\begin{compactitem}
\item our proof applies to a very general attacker model, and
\item our proof applies to any implementation that is a (strict) refinement of the abstract algorithm.
\end{compactitem}
\smallskip

First, we must define what computation is done, and what is our threat model.

\begin{definition}[CRT-RSA]
The CRT-RSA computation takes as input a message $m$,
assumed known by the attacker, and a secret key $(p, q, d_p, d_q, i_q)$.
Then, the implementation is free to instantiate any variable, but must return a result equal to
$S = S_q + q \cdot (i_q \cdot (S_p - S_q) \mod p)$, where:
\begin{compactitem}
\item $S_p = m^{d_p} \mod p$, and
\item $S_q = m^{d_q} \mod q$.
\end{compactitem}
\label{def-crtrsa}
\end{definition}

\begin{definition}[fault injection]
An attacker is able to request RSA computations, as per Def.~\ref{def-crtrsa}.
During the computation, the attacker can modify any intermediate value by setting it to either a random value or zero.
At the end of the computation the attacker can read the result.
\label{def-faultinj}
\end{definition}
Of course, the attacker cannot read the intermediate values used during the computation,
since the secret key and potentially the modulus factors are used.
Such ``whitebox'' attack would be too powerful;
nonetheless, it is very hard in practice for an attacker to be able to access intermediate variables,
due to specific protections (\eg, blinding) and noise in the side-channel leakage (\eg, power consumption, electromagnetic emanation).
Remark that our model only takes into account fault injection on data;
the control flow is supposed not to be mutable.

As a side remark, we notice that the fault injection model of Def.~\ref{def-faultinj} corresponds to that of Vigilant (\cite{DBLP:conf/ches/Vigilant08}), with the exception that the conditional tests can also be faulted.
To summarize, an attacker can:
\begin{compactitem}
\item modify a value in memory (\emph{permanent fault}), and
\item modify a value in a local register, cache, or bus (\emph{transient fault}),
\end{compactitem}
but cannot
\begin{compactitem}
\item inject a permanent fault in the input data (message and secret key), nor
\item modify the algorithm control flow graph.
\end{compactitem}
\smallskip

The independence of the proofs on the algorithm implementation demands that the algorithm is described at a high level.
The two properties that characterize the relevant level are as follows:

\begin{compactenum}
\item The description should be low level enough for the attack to work if protections are not implemented.
\item Any additional intermediate variable that would appear during refinement could be the target of an attack,
but such a fault would propagate to an intermediate variable of the high level description, thereby having the same effect.
\end{compactenum}
From those requirements, we deduce that:
\begin{compactenum}
\item The RSA description must exhibit the computation modulo $p$ and $q$ and the CRT recombination;
typically, a completely blackbox description, where the computations would be realized in one go without intermediate variables, is not conceivable.
\item However, it can remain abstract, especially for the computational parts\footnote{For example, a fault in the implementation of the multiplication is either inoffensive, and we do not need to care about it, or it affects the result of the multiplication, and our model take it into account without going into the details of how the multiplication's is computed}.
\end{compactenum}
\smallskip

In our approach, the protections must thus be considered as an augmentation of the unprotected code,
\ie, a derived version of the code where additional variables are used.
The possibility of an attack on the unprotected code attests that the algorithm is described at the adequate level,
while the impossibility of an attack (to be proven) on the protected code shows that added protections are useful in terms of resistance to attacks.
\begin{remark}
The algorithm only exhibits evidence of safety.
If after a fault injection, the algorithm does not simplify to an error detection, then it might only reveal that some simplification is missing.
However, if it does not claim safety, it produces a \emph{simplified} occurrence of a possible weakness to be investigated further.
\end{remark}

\subsection{\finja}

Several tools are \emph{a priori} suitable for a formal analysis of CRT-RSA.
PARI/GP is a specialized computer algebra system, primarily aimed at solving number theory problems.
Although PARI/GP can do a fair amount of symbolic manipulation, it remains limited compared to systems like Axiom, Magma, Maple, Mathematica, Maxima, or Reduce.
Those last software also fall short to implement automatically number theoretic results like Euler's theorem.
This explains why we developed from scratch a system to reason on modular numbers from a formal point of view.
Our system is not general, in that it cannot for instance factorize terms in an expression.
However, it is able to simplify recursively what is simplifiable from a set of unambiguous rules.
This behavior is suitable to the problem of resistance to fault attacks,
because the redundancy that is added in the computation is meant to be simplified at the end (if no faults happened).

Our tool \finja works within the framework of modular arithmetic, which is the mathematical framework of CRT-RSA computations.
The general idea is to represent the computation term as a tree which encodes the computation properties.
This term can be simplified by \finja, using rules from arithmetic and the properties encoded in the tree.
Fault injections in the computation term are simulated by changing the properties of a subterm, thus impacting the simplification process.
An attack success condition is also given and used on the term resulting from the simplification to check whether the corresponding attack works on it.
The outputs of \finja are in HTML form: easily readable reports are produced, which contains all the information about the possible fault injections and their outcome.

\subsubsection{Computation Term}
\label{finja-term}

The computation is expressed in a convenient statement-based input language.
This language's Backus Normal Form (BNF) is given in Fig.~\ref{bnf}.

\begin{figure}
\begin{minipage}[c]{\linewidth}
\begin{minipage}[c]{0.05\linewidth}
~
\end{minipage}
\begin{minipage}[c]{0.85\linewidth} {\smaller %% ePrint
\begin{Verbatim}[commandchars=@~&,numbers=left,numbersep=5pt,frame=lines]
term    ::= ( stmt )* 'return' mp_expr ';'
stmt    ::= ( decl | assign | verif ) ';'
decl    ::= 'noprop' mp_var ( ',' mp_var )*   @label~bnf:noprop&
          | 'prime' mp_var ( ',' mp_var )*    @label~bnf:prime&
assign  ::= var ':=' mp_expr                  @label~bnf:assign&
verif   ::= 'if' mp_cond 'abort with' mp_expr @label~bnf:verif&
mp_expr ::= '{' expr '}' | expr               @label~bnf:mpe&
expr    ::= '(' mp_expr ')'
          | '0' | '1' | var                   @label~bnf:expr1&
          | '-' mp_expr
          | mp_expr '+' mp_expr
          | mp_expr '-' mp_expr
          | mp_expr '*' mp_expr
          | mp_expr '^' mp_expr
          | mp_expr 'mod' mp_expr             @label~bnf:expr2&
mp_cond ::= '{' cond '}' | cond               @label~bnf:mpc&
cond    ::= '(' mp_cond ')'
          | mp_expr '=' mp_expr               @label~bnf:cond1&
          | mp_expr '!=' mp_expr
          | mp_expr '=[' mp_expr ']' mp_expr  @label~bnf:eqmod&
          | mp_expr '!=[' mp_expr ']' mp_expr @label~bnf:neqmod&
          | mp_cond '/\' mp_cond
          | mp_cond '\/' mp_cond              @label~bnf:cond2&
mp_var  ::= '{' var '}' | var
var     ::= [a-zA-Z][a-zA-Z0-9_']*            @label~bnf:var&
\end{Verbatim}
} %% ePrint
\caption{\label{bnf} BNF of \finja's input language.}
\end{minipage}
\end{minipage}
\end{figure}

A computation term is defined by a list of statements finished by a {\tt return} statement.
Each statement can either:
\begin{itemize}
\item declare a variable with no properties (line~\ref{bnf:noprop});
\item declare a variable which is a prime number (line~\ref{bnf:prime});
\item declare a variable by assigning it a value (line~\ref{bnf:assign}), in this case the properties of the variable are the properties of the assigned expression;
\item perform a verification (line~\ref{bnf:verif}).
\end{itemize}

As can be seen in lines~\ref{bnf:expr1} to~\ref{bnf:expr2}, an expression can be:
\begin{itemize}
\item zero, one, or an already declared variable;
\item the sum (or difference) of two expressions;
\item the product of two expressions;
\item the exponentiation of an expression by another;
\item the modulus of an expression by another.
\end{itemize}

The condition in a verification can be (lines~\ref{bnf:cond1} to~\ref{bnf:cond2}):
\begin{itemize}
\item the equality or inequality of two expressions;
\item the equivalence or non-equivalence of two expressions modulo another (lines~\ref{bnf:eqmod} and~\ref{bnf:neqmod});
\item the conjunction or disjunction of two conditions.
\end{itemize}

Optionally, variables (when declared using the {\tt prime} or {\tt noprop} keywords), expressions, and conditions can be protected (lines~\ref{bnf:noprop}, \ref{bnf:prime}, \ref{bnf:mpe} and~\ref{bnf:mpc}, {\tt mp} stands for ``maybe protected'') from fault injection by surrounding them with curly braces.
This is useful for instance when it is needed to express the properties of a variable which cannot be faulted in the studied attack model.
For example, in CRT-RSA, the definitions of variables $d_p$, $d_q$, and $i_q$ are protected because they are seen as input of the computation.

Finally, line~\ref{bnf:var} gives the regular expression that variable names must match (they start with a letter and then can contain letters, numbers, underscore, and simple quote).

\medskip

After it is read by \finja, the computation expressed in this input language is transformed into a tree (just like the abstract syntax tree in a compiler).
This tree encodes the arithmetical properties of each of the intermediate variable, and thus its dependencies on previous variables.
The properties of intermediate variables can be everything that is expressible in the input language.
For instance, being null or being the product of other terms (and thus, being a multiple of each of them), are possible properties.

\subsubsection{Fault Injection}
\label{finja-fi}

A fault injection on an intermediate variable is represented by changing the properties of the subterm (a node and its whole subtree in the tree representing the computation term) that represent it.
In the case of a fault which forces at zero, then the whole subterm is replaced by a term which only has the property of being null.
In the case of a randomizing fault, by a term which have no properties.

\finja simulates \emph{all the possible fault injections} of the attack model it is launched with.
The parameters allow to choose:
\begin{itemize}
\item \emph{how many faults} have to be injected (however, the number of tests to be done is impacted by a factorial growth with this parameter, as is the time needed to finish the computation of the proof);
\item \emph{the type} of each fault (\emph{randomizing} or \emph{zeroing});
\item if \emph{transient} faults are possible or if only \emph{permanent} faults should be performed.
\end{itemize}

\subsubsection{Attack Success Condition}
\label{finja-asc}

The attack success condition is expressed using the same condition language as presented in Sec.~\ref{finja-term}.
It can use any variable introduced in the computation term, plus two special variables {\tt \_} and {\tt @} which are respectively bound to the expression returned by the computation term as given by the user and to the expression returned by the computation with the fault injections.
This success condition is checked for each possible faulted computation term.

\subsubsection{Simplification Process}
\label{finja-reduce}

The simplification is implemented as a recursive traversal of the term tree, based on pattern-matching.
It works just like a naive interpreter would, except it does symbolic computation only, and reduces the term based on rules from arithmetic.
Simplifications are carried out in $\Z$ ring, and its $\Z_N$ subrings.
The tool knows how to deal with most of the $\Z$ ring axioms:
\begin{itemize}
\item the neutral elements ($0$ for sums, $1$ for products);
\item the absorbing element ($0$, for products);
\item inverses and opposites (only if $N$ is prime);
\item associativity and commutativity.
\end{itemize}
However, it does not implement distributivity as it is not confluent. Associativity is implemented by flattening as much as possible (``removing'' all unnecessary parentheses), and commutativity is implemented by applying a stable sorting algorithm on the terms of products or sums.

The tool also knows about most of the properties that are particular to $\Z_N$ rings and applies them when simplifying a term modulo $N$:
\begin{itemize}
\item identity:
  \begin{itemize}
  \item $(a \mod N) \mod N = a \mod N$,
  \item $N^k \mod N = 0$;
  \end{itemize}
\item inverse:
  \begin{itemize}
  \item $(a \mod N) \times (a^{-1} \mod N) \mod N = 1$,
  \item $(a \mod N) + (-a \mod N) \mod N = 0$;
  \end{itemize}
\item associativity and commutativity:
  \begin{itemize}
  \item $(b \mod N) + (a \mod N) \mod N = a + b \mod N$,
  \item $(a \mod N) \times (b \mod N) \mod N = a \times b \mod N$;
  \end{itemize}
\item subrings: $(a \mod N \times m) \mod N = a \mod N$.
\end{itemize}

In addition to those properties a few theorems are implemented to manage more complicated cases where the properties are not enough when conducting symbolic computations:
\begin{itemize}
\item Fermat's little theorem;
\item its generalization, Euler's theorem.
\end{itemize}

\paragraph{Example.}
If we have the term {\tt t := a + b * c}, it can be faulted in five different ways (using the randomizing fault):
\begin{enumerate}
\item {\tt t := Random}, the final result is faulted;
\item {\tt t := Random + b * c}, $a$ is faulted;
\item {\tt t := a + Random}, the result of $b \times c$ is faulted;
\item {\tt t := a + Random * c}, $b$ is faulted;
\item {\tt t := a + b * Random}, $c$ is faulted.
\end{enumerate}
If the properties that interest us is to know whether $t$ is congruent with $a$ modulo $b$, we can use {\tt t =[b] a} as the attack success condition.
Of course it will be true for $t$, but it will only be true for the fifth version of faulted $t$.
If we had used the zeroing fault, it would also have been true for the third and fourth versions.

\section{Study of an Unprotected CRT-RSA Computation}
\label{sec-unprotected}

The description of the unprotected CRT-RSA computation in \finja code is given in Fig.~\ref{code-unprotec} (note the similarity of \finja's input code with Alg.~\ref{alg-crt-rsa-unprotected}).

\begin{figure}[h]
\begin{minipage}[c]{\linewidth}
\begin{minipage}[c]{0.05\linewidth}
~
\end{minipage}
\begin{minipage}[c]{0.85\linewidth} {\smaller %% ePrint
\begin{Verbatim}[commandchars=|~&,numbers=left,numbersep=5pt,frame=lines]
noprop m, e ;
prime  {p}, {q} ;

dp := { e^-1 mod (p-1) } ;
dq := { e^-1 mod (q-1) } ;
iq := { q^-1 mod p } ;

Sp := m^dp mod p ;
Sq := m^dq mod q ;

S := Sq + (q * (iq * (Sp - Sq) mod p)) ;

return S ;

%%

_ != @ /\ ( _ =[p] @ \/ _ =[q] @ )
\end{Verbatim}
} %% ePrint
\end{minipage}
\end{minipage}
\caption{\label{code-unprotec} \finja code for the unprotected CRT-RSA computation.}
\end{figure}

As we can see, the definitions of $d_p$, $d_q$, and $i_q$ are protected so the computation of the values of these variables cannot be faulted (since they are seen as inputs of the algorithm).
After that, $S_p$ and $S_q$ are computed and then recombined in the last expression, as in Def.~\ref{def-crtrsa}.

To test whether the BellCoRe attack works on a faulted version $\widehat{S}$, we perform the following tests (we note $|S|$ for the simplified version of $S$):

\begin{compactenum}
\item is $|S|$ equal to $|\widehat{S}|$?
\item is $|S \mod p|$ equal to $|\widehat{S} \mod p|$?
\item is $|S \mod q|$ equal to $|\widehat{S} \mod q|$?
\end{compactenum}

If the first test is false and at least one of the second and third is true, we have a BellCoRe attack, as seen in Sec.~\ref{sec-bellcore}.
This is what is described in the attack success condition (after the {\tt \%\%} line).

Without transient faults enabled, and in a single fault model, there are $12$ different fault injections of which $8$ enable a BellCoRe attack with a randomizing fault, and $9$ with a zeroing fault.
As an example, replacing the intermediate variable holding the value of $i_q \cdot (S_p-S_q) \mod p$ in the final expression with zero or a random value makes the first and second tests false, and the last one true, and thus allows a BellCoRe attack.

\section{Study of Shamir's Countermeasure}
\label{sec-shamir}

The description, using \finja's formalism, of the CRT-RSA computation allegedly protected by Shamir's countermeasure is given in Fig.~\ref{code-shamir} (again, note the similarity with Alg.~\ref{alg-crt-rsa-shamir}).

\begin{figure}[h]
\begin{minipage}[c]{\linewidth}
\begin{minipage}[c]{0.05\linewidth}
~
\end{minipage}
\begin{minipage}[c]{0.85\linewidth} {\smaller %% ePrint
\begin{Verbatim}[commandchars=|~&,numbers=left,numbersep=5pt,frame=lines]
noprop error, m, d ;
prime {p}, {q}, r ;
iq := { q^-1 mod p } ;

p' := p * r ;
dp := d mod ((p-1) * (r-1)) ;
Sp' := m^dp mod p' ;

q' := q * r ;
dq := d mod ((q-1) * (r-1)) ;
Sq' := m^dq mod q' ;

Sp := Sp' mod p ; |label~code_shamir_sp&
Sq := Sq' mod q ;

S := Sq + (q * (iq * (Sp - Sq) mod p)) ; |label~code_shamir_recomb&

if Sp' !=[r] Sq' abort with error ; |label~code_shamir_test&

return S ;

%%

_ != @ /\ ( _ =[p] @ \/ _ =[q] @ )
\end{Verbatim}
} %% ePrint
\end{minipage}
\end{minipage}
\caption{\label{code-shamir} \finja code for the Shamir CRT-RSA computation.}
\end{figure}

Using the same settings as for the unprotected implementation of CRT-RSA, we find that among the $31$ different fault injections, $10$ enable a BellCoRe attack with a randomizing fault, and $9$ with a zeroing fault.
This is not really surprising, as the test which is done on line~\ref{code_shamir_test} does not verify if a fault is injected during the computations of $S_p$ or $S_q$, nor during their recombination in $S$.
For instance zeroing or randomizing the intermediate variable holding the result of $S_p - S_q$ during the computation of $S$ (line~\ref{code_shamir_recomb}) results in a BellCoRe attack.
To explain why there is this problem in Shamir's countermeasure,
some context might be necessary.
It can be noted that the fault to inject in the countermeasure must be more accurate in timing (since it targets an intermediate variable obtained by a \emph{subtraction}) than the faults to achieve a BellCoRe attack on the unprotected CRT-RSA (since a fault during an \emph{exponentiation} suffices).
However, there is today a consensus to believe that it is very easy to pinpoint in time any single operation of a CRT-RSA algorithm,
using a simple power analysis method~\cite{kocher-dpa_and_related_attacks}.
Besides, timely fault injection benches exist.
Therefore, the weaknesses in Shamir's countermeasure can indeed be practically exploited.

If the attacker can do \emph{transient faults}, there are a lot more attacks:
$66$ different possible fault injections of which $24$ enable a BellCoRe attack with a randomizing fault and $22$ with a zeroing fault.
In practice, a transient faults would translate into faulting the variable when it is read (\eg, in a register or on a bus), rather than in (persistent) memory.
This behavior could also be the effect of a fault injection in cache, which is later replaced with the good value when it is read from memory again.
To the authors knowledge, these are not impossible situations.
Nonetheless, growing the power of the attacker to take that into account break some very important assumptions that are classical (sometimes even implicit) in the literature.
It does not matter that the parts of the secret key are stored in a secure ``key container'' if their values can be a faulted at read time.
Indeed, we just saw that allowing this kind of fault enable even more possibilities to carry out a BellCoRe attack successfully on a CRT-RSA computation protected by the Shamir's countermeasure.
For instance, if the value of $p$ is randomized for the computation of the value of $S_p$ (line~\ref{code_shamir_sp}),
then we have $S \neq \widehat{S}$, but also $S \equiv \widehat{S} \mod q$, which enables a BellCoRe attack, as seen in Sec.~\ref{sec-bellcore}.

It is often asserted that the countermeasure of Shamir is unpractical due to its need for $d$
(as mentioned in~\cite{DBLP:conf/ches/AumullerBFHS02} and~\cite{DBLP:conf/ches/Vigilant08}),
and because there is a possible fault attack on the recombination,
\ie, line~\ref{code_shamir_recomb}
(as mentioned in~\cite{DBLP:conf/ches/Vigilant08}).
However, the attack on the recombination can be checked easily, by testing that $S-S_p \not\equiv 0 \mod p$ and $S-S_q \not\equiv 0 \mod q$ before returning the result.
Notwithstanding, to our best knowledge, it is difficult to detect all the attacks our tool found, and so the existence of these attacks (new, in the sense they have not all been described previously) is a compelling reason for not implementing Shamir's CRT-RSA.

\section{Study of Aumüller \etal's Countermeasure}
\label{sec-aumuller}

The description of the CRT-RSA computation protected by Aumüller \etal's countermeasure is given in Fig.~\ref{code-aumuller} (here too, note the similarity with Alg.~\ref{alg-crt-rsa-aumuller})

\begin{figure}[h]
\begin{minipage}[c]{\linewidth}
\begin{minipage}[c]{0.05\linewidth}
~
\end{minipage}
\begin{minipage}[c]{0.85\linewidth} {\smaller %% ePrint
\begin{Verbatim}[commandchars=|~&,numbers=left,numbersep=5pt,frame=lines]
noprop error, m, e, r1, r2 ;
prime {p}, {q}, t ;

dp := { e^-1 mod (p-1) } ;
dq := { e^-1 mod (q-1) } ;
iq := { q^-1 mod p } ;

p' := p * t ;
dp' := dp + r1 * (p-1) ;
Sp' := m^dp' mod p' ;

if p' !=[p] 0 \/ dp' !=[p-1] dp abort with error ; |label~code_aumuller_1&

q' := q * t ;
dq' := dq + r2 * (q-1) ;
Sq' := m^dq' mod q' ;

if q' !=[q] 0 \/ dq' !=[q-1] dq abort with error ; |label~code_aumuller_2&

Sp := Sp' mod p ;
Sq := Sq' mod q ;

S := Sq + (q * (iq * (Sp - Sq) mod p)) ;

if S !=[p] Sp' \/ S !=[q] Sq' abort with error ;  |label~code_aumuller_3&

Spt := Sp' mod t ;
Sqt := Sq' mod t ;
dpt := dp' mod (t-1) ;
dqt := dq' mod (t-1) ;

if Spt^dqt !=[t] Sqt^dpt abort with error ;  |label~code_aumuller_eg&

return S ;

%%

_ != @ /\ ( _ =[p] @ \/ _ =[q] @ )
\end{Verbatim}
} %% ePrint
\end{minipage}
\end{minipage}
\caption{\label{code-aumuller} \finja code for the Aumüller \etal CRT-RSA computation.}
\end{figure}

Using the same method as before, we can prove that on the $52$ different possible faults, \emph{none} of which allow a BellCoRe attack, whether the fault is zero or random.
This is a proof that the Aumüller \etal's countermeasure works when there is one fault\footnote{This result is worthwhile some emphasis:
the genuine algorithm of Aumüller is thus \emph{proved} resistant against single-fault attacks.
At the opposite, the CRT-RSA algorithm of Vigilant is not immune to single fault attacks (refer to~\cite{DBLP:conf/fdtc/CoronGMPV10}), and the corrections suggested in the same paper by Coron \etal have not been proved yet.}.

Since it allowed more attacks on the Shamir's countermeasure, we also tested the Aumüller \etal's countermeasure against \emph{transient faults} such as described in Sec.~\ref{sec-shamir}.
There are $120$ different possible fault injections when transient faults are activated, and Aumüller \etal's countermeasure is resistant against such fault injections too.

We also used \finja to confirm that the computation of $d_p$, $d_q$, and $i_q$ (in terms of $p$, $q$, and $d$) must not be part of the algorithm.
The countermeasure effectively needs these three variables to be inputs of the algorithm to work properly.
For instance there is a BellCoRe attack if $d_q$ happens to be zeroed.
However, even with $d_p$, $d_q$, and $i_q$ as inputs, we can still attempt to attack a CRT-RSA implementation protected by the Aumüller \etal's countermeasure by doing more than one fault.

We then used \finja to verify whether Aumüller \etal's countermeasure would be resistant against \emph{high order} attacks, starting with two faults.
We were able to break it if at least one of the two faults was a zeroing fault.
We found that this zeroing fault was used to falsify the condition of a verification, which is possible in our threat-model, but which was not in the one of the authors of the countermeasure.
If we protect the conditions against fault injection, then the computation is immune two double-fault attacks too.
However, even in this less powerful threat-model, a CRT-RSA computation protected by Aumüller \etal's countermeasure is breakable using $3$ faults, two of which must be zeroing the computations of $d_{pt}$ and $d_{qt}$.

\section{Conclusions and Perspectives}
\label{sec-perspectives}

We have formally proven the resistance of the Aumüller \etal's countermeasure against the BellCoRe attack by fault injection on CRT-RSA.
To our knowledge, it is the first time that a formal proof of security is done for a BellCoRe countermeasure.

During our research, we have raised several questions about the assumptions traditionally made by countermeasures.
The possibility of fault at read time is, in particular, responsible for many vulnerabilities.
The possibility of such fault means that part of the secret key can be faulted (even if only for one computation).
It allows an interesting BellCoRe attack on a computation of CRT-RSA protected by Shamir's countermeasure.
We also saw that the assumption that the result of conditional expression cannot be faulted, which is widespread in the literature, is a dangerous one as it increased the number of fault necessary to break Aumüller \etal's countermeasure from $2$ to $3$.

The first of these two points demonstrates the lack of formal studies of fault injection attack and their countermeasures,
while the second one shows the importance of formal methods in the field of implementation security.

\medskip

As a first perspective, we would like to address the hardening of software codes of CRT-RSA under the threat of a bug attack.
This attack has been introduced by Biham, Carmeli and Shamir~\cite{DBLP:conf/crypto/BihamCS08} at CRYPTO 2008.
It assumes that a hardware has been trapped in such a way that there exists two integers $a$ and $b$,
for which the multiplication is incorrect.
In this situation, Biham, Carmeli and Shamir mount an explicit attack scenario where the knowledge of $a$ and $b$ is leveraged to produce a faulted result, that can lead to a remote BellCoRe attack.
For sure, testing for the correct functionality of the multiplication operation is impractical
(it would amount to an exhaustive verification of $2^{128}$ multiplications on $64$~bit computer architectures).
Thus, it can be imagined to use a countermeasure, like that of Aumüller, to detect a fault (caused logically).
Our aim would be to assess in which respect our fault analysis formal framework allows to validate the security of the protection.
Indeed, a fundamental difference is that the fault is not necessarily injected at \emph{one} random place,
but can potentially show up at {\emph{several} places.

As another perspective, we would like to handle the repaired countermeasure of Vigilant~\cite{DBLP:conf/fdtc/CoronGMPV10} and the countermeasure of Kim~\cite{Kim:2011:ECA:2010601.2010865}.
Regarding Vigilant, the difficulty that our verification framework in OCaml~\cite{OCaml} shall overcome is to decide how to inject the remarkable identity $(1+r)^{d_p} \equiv 1 + d_p \cdot r \mod r^2$:
either it is kept as such such, like an \emph{ad hoc} theorem (but we need to make sure it is called only at relevant places, since it is not confluent), or it is made more general (but we must ascertain that the verification remains tractable).
However, this effort is worthwhile%
\footnote{Some results will appear in the proceedings of the 3rd ACM SIGPLAN Program Protection and Reverse Engineering Workshop (PPREW 2014)~\cite{PR:PPREW14}, collocated with POPL 2014.},
because the authors themselves say in the conclusion of their article~\cite{DBLP:conf/fdtc/CoronGMPV10} that:
\begin{quote}
\hspace*{-2mm}\emph{``Formal proof of the FA-resistance of Vigilant's sche\-me including our countermeasures is still an open (and challenging) issue.''}
\end{quote}
Regarding the CRT-RSA algorithm from Kim, the computation is very detailed (it goes down to the multiplication level), and involves Boolean operations (\texttt{and}, \texttt{xor}, \etc). To manage that, more expertise about both arithmetic and logic must be added to our software.

Eventually, we wish to answer a question raised by Vigilant~\cite{DBLP:conf/ches/Vigilant08} about the prime $t$ involved in Aumüller \etal's countermeasure:
\begin{quote}
\emph{``Is it fixed or picked at random in a fixed table?''}
\end{quote}
The underlying issue is that of \emph{replay} attacks on CRT-RSA, that are more complicated to handle;
indeed, they would require a formal system such as ProVerif~\cite{ProVerif}, that is able to prove interactive protocols.

\medskip

Concerning the tools we developed during our research, they currently only allow to study fault injection in the data, and not in the control flow,
it would be interesting to enable formal study of fault injections affecting the control flow.

\medskip

Eventually, we would like to define and then implement an automatic code mutation algorithm that could transform an un\-pro\-tect\-ed CRT-RSA into a protected one.
We know that with a few alterations (see that the differences between Alg.~\ref{alg-crt-rsa-unprotected} and Alg.~\ref{alg-crt-rsa-aumuller} are enumerable),
this is possible.
Such promising approach, if successful, would uncover the \emph{smallest possible} countermeasure of CRT-RSA against fault injection attacks.

\section*{Acknowledgements}

The authors wish to thank Jean-Pierre Seifert and Wieland Fischer for insightful comments and pieces of advice.
We are also grateful to the anonymous reviewers of \href{http://www.proofs-workshop.org/}{PROOFS 2013} (UCSB, USA),
who helped improve the preliminary version of this paper.
Eventually, we acknowledge precious suggestions contributed by Jean-Luc Danger, Jean Goubault-Larrecq, and Karine Heydemann.

\balance
\bibliographystyle{alpha}
\bibliography{sca}

\newcommand{\etalchar}[1]{$^{#1}$}
\begin{thebibliography}{CGM{\etalchar{+}}10}

\bibitem[ABF{\etalchar{+}}02]{DBLP:conf/ches/AumullerBFHS02}
Christian Aum{\"u}ller, Peter Bier, Wieland Fischer, Peter Hofreiter, and
  Jean-Pierre Seifert.
\newblock {Fault Attacks on RSA with CRT: Concrete Results and Practical
  Countermeasures}.
\newblock In Burton~S. Kaliski, Jr., {\c{C}}etin~Kaya Ko{\c{c}}, and Christof
  Paar, editors, {\em CHES}, volume 2523 of {\em Lecture Notes in Computer
  Science}, pages 260--275. Springer, 2002.

\bibitem[BCDG12]{DBLP:series/isc/BerzatiCG12}
Alexandre Berzati, C{\'e}cile Canovas-Dumas, and Louis Goubin.
\newblock {A Survey of Differential Fault Analysis Against Classical RSA
  Implementations}.
\newblock In Marc Joye and Michael Tunstall, editors, {\em Fault Analysis in
  Cryptography}, Information Security and Cryptography, pages 111--124.
  Springer, 2012.

\bibitem[BCS08]{DBLP:conf/crypto/BihamCS08}
Eli Biham, Yaniv Carmeli, and Adi Shamir.
\newblock Bug attacks.
\newblock In {\em CRYPTO}, volume 5157 of {\em LNCS}, pages 221--240. Springer,
  2008.
\newblock {Santa Barbara, CA, USA}.

\bibitem[BDL97]{boneh-fault}
Dan Boneh, Richard~A. DeMillo, and Richard~J. Lipton.
\newblock {On the Importance of Checking Cryptographic Protocols for Faults}.
\newblock In {\em Proceedings of Eurocrypt'97}, volume 1233 of {\em LNCS},
  pages 37--51. Springer, May 11-15 1997.
\newblock {Konstanz, Germany. DOI: 10.1007/3-540-69053-0\_4}.

\bibitem[Bla]{ProVerif}
Bruno Blanchet.
\newblock {ProVerif: Cryptographic protocol verifier in the formal model}.
\newblock \url{http://prosecco.gforge.inria.fr/personal/bblanche/proverif/}.

\bibitem[BOS03]{DBLP:conf/ccs/BlomerOS03}
Johannes Bl{\"o}mer, Martin Otto, and Jean-Pierre Seifert.
\newblock {A new CRT-RSA algorithm secure against bellcore attacks}.
\newblock In Sushil Jajodia, Vijayalakshmi Atluri, and Trent Jaeger, editors,
  {\em ACM Conference on Computer and Communications Security}, pages 311--320.
  ACM, 2003.

\bibitem[BS97]{biham97differential}
Eli Biham and Adi Shamir.
\newblock {Differential Fault Analysis of Secret Key Cryptosystems}.
\newblock In {\em CRYPTO}, volume 1294 of {\em LNCS}, pages 513--525. Springer,
  August 1997.
\newblock {Santa Barbara, California, USA. DOI: 10.1007/BFb0052259}.

\bibitem[CCGV13]{JCEN-Christofi13}
Maria Christofi, Boutheina Chetali, Louis Goubin, and David Vigilant.
\newblock {Formal verification of an implementation of CRT-RSA Vigilant's
  algorithm}.
\newblock {\em Journal of Cryptographic Engineering}, 3(3), 2013.
\newblock DOI: 10.1007/s13389-013-0049-3.

\bibitem[CGM{\etalchar{+}}10]{DBLP:conf/fdtc/CoronGMPV10}
Jean-S{\'e}bastien Coron, Christophe Giraud, Nicolas Morin, Gilles Piret, and
  David Vigilant.
\newblock {Fault Attacks and Countermeasures on Vigilant's RSA-CRT Algorithm}.
\newblock In Luca Breveglieri, Marc Joye, Israel Koren, David Naccache, and
  Ingrid Verbauwhede, editors, {\em FDTC}, pages 89--96. IEEE Computer Society,
  2010.

\bibitem[DSE{\etalchar{+}}12]{YS:HASP12}
Nicolas Debande, Youssef Souissi, Moulay~Abdelaziz Elaabid, Sylvain Guilley,
  and Jean-Luc Danger.
\newblock {Wavelet Transform Based Pre-processing for Side Channel Analysis}.
\newblock In {\em HASP}, pages 32--38. IEEE, December 2nd 2012.
\newblock {Vancouver, British Columbia, Canada. DOI: 10.1109/MICROW.2012.15}.

\bibitem[Gar65]{DBLP:journals/ac/Garner65}
Harvey~L. Garner.
\newblock {Number Systems and Arithmetic}.
\newblock {\em Advances in Computers}, 6:131--194, 1965.

\bibitem[GMK12]{cryptoeprint:2012:552}
Xiaofei Guo, Debdeep Mukhopadhyay, and Ramesh Karri.
\newblock Provably secure concurrent error detection against differential fault
  analysis.
\newblock Cryptology ePrint Archive, Report 2012/552, 2012.
\newblock \url{http://eprint.iacr.org/2012/552/}.

\bibitem[{INR}]{OCaml}
{INRIA}.
\newblock {OCaml, a variant of the Caml language}.
\newblock \url{http://caml.inria.fr/ocaml/index.en.html}.

\bibitem[JLQ99]{DBLP:journals/joc/JoyeLQ99}
Marc Joye, Arjen~K. Lenstra, and Jean-Jacques Quisquater.
\newblock {Chinese Remaindering Based Cryptosystems in the Presence of Faults}.
\newblock {\em J. Cryptology}, 12(4):241--245, 1999.

\bibitem[Joy09]{DBLP:conf/fdtc/Joye09}
Marc Joye.
\newblock {Protecting RSA against Fault Attacks: The Embedding Method}.
\newblock In Luca Breveglieri, Israel Koren, David Naccache, Elisabeth Oswald,
  and Jean-Pierre Seifert, editors, {\em FDTC}, pages 41--45. IEEE Computer
  Society, 2009.

\bibitem[JP03]{DBLP:conf/ches/JoyeP03}
Marc Joye and Pascal Paillier.
\newblock {GCD-Free Algorithms for Computing Modular Inverses}.
\newblock In Colin~D. Walter, \c{C}etin Kaya~Ko\c{c}, and Christof Paar,
  editors, {\em CHES}, volume 2779 of {\em Lecture Notes in Computer Science},
  pages 243--253. Springer, 2003.

\bibitem[KJJ99]{kocher-dpa_and_related_attacks}
Paul~C. Kocher, Joshua Jaffe, and Benjamin Jun.
\newblock {Differential Power Analysis}.
\newblock In {\em Proceedings of CRYPTO'99}, volume 1666 of {\em LNCS}, pages
  388--397. Springer-Verlag, 1999.

\bibitem[KKHH11]{Kim:2011:ECA:2010601.2010865}
Sung-Kyoung Kim, Tae~Hyun Kim, Dong-Guk Han, and Seokhie Hong.
\newblock {An efficient CRT-RSA algorithm secure against power and fault
  attacks}.
\newblock {\em J. Syst. Softw.}, 84:1660--1669, October 2011.

\bibitem[Ko{\c{c}}94]{koc_rsa}
\c{C}etin~Kaya Ko{\c{c}}.
\newblock {High-Speed RSA Implementation}, November 1994.
\newblock Version 2, \url{ftp://ftp.rsasecurity.com/pub/pdfs/tr201.pdf}.

\bibitem[RG14]{PR:PPREW14}
Pablo Rauzy and Sylvain Guilley.
\newblock {Formal Analysis of CRT-RSA Vigilant's Countermeasure Against the
  BellCoRe Attack --- A Pledge for Formal Methods in the Field of
  Implementation Security}.
\newblock In {\em 3rd ACM SIGPLAN Program Protection and Reverse Engineering
  Workshop (PPREW 2014)}, January 25 2014.
\newblock San Diego, CA, USA. ISBN: 978-1-4503-2649-0.

\bibitem[RSA78]{DBLP:journals/cacm/RivestSA78}
Ronald~L. Rivest, Adi Shamir, and Leonard~M. Adleman.
\newblock {A Method for Obtaining Digital Signatures and Public-Key
  Cryptosystems}.
\newblock {\em Commun. ACM}, 21(2):120--126, 1978.

\bibitem[SED{\etalchar{+}}11]{NIAT11_wavelet}
Youssef Souissi, Moulay~Aziz Elaabid, Jean-Luc Danger, Sylvain Guilley, and
  Nicolas Debande.
\newblock {Novel Applications of Wavelet Transforms based Side-Channel
  Analysis}, September 26-27 2011.
\newblock {Non-Invasive Attack Testing Workshop (NIAT 2011), co-organized by
  NIST \& AIST. Todai-ji Cultural Center, Nara, Japan.
  (\href{http://csrc.nist.gov/news_events/non-invasive-attack-testing-workshop/papers/01_Souissi.pdf}{PDF})}.

\bibitem[Sha99]{shamir-patent-rsa-crt}
Adi Shamir.
\newblock {Method and apparatus for protecting public key schemes from timing
  and fault attacks}, November 1999.
\newblock {Patent Number 5,991,415; also presented at the rump session of
  EUROCRYPT '97}.

\bibitem[TW12]{Intro_HOST}
Mohammad Tehranipoor and Cliff Wang, editors.
\newblock {\em {Introduction to Hardware Security and Trust}}.
\newblock Springer, 2012.
\newblock {ISBN 978-1-4419-8079-3}.

\bibitem[Vig08]{DBLP:conf/ches/Vigilant08}
David Vigilant.
\newblock {RSA with CRT: A New Cost-Effective Solution to Thwart Fault
  Attacks}.
\newblock In Elisabeth Oswald and Pankaj Rohatgi, editors, {\em CHES}, volume
  5154 of {\em Lecture Notes in Computer Science}, pages 130--145. Springer,
  2008.

\end{thebibliography}

\end{document}